\theoremstyle{plain}
\newtheorem{lemma}{Lemma}
\newtheorem{corollary}{Corollary}
\newtheorem{proposition}{Proposition}
\theoremstyle{definition}
\newtheorem{remark}{Remark}
\theoremstyle{plain}
\newtoks\thehProclaim
\newtheorem*{Proclaim}{\the\thehProclaim}
\theoremstyle{definition}
\newtoks{\thehRemark}
\newtheorem*{Remark}{\the\thehRemark}
\begin{document}

\title{Hooke and Coulomb energy of tripod spiders}

\author{Giorgi Khimshiashvili}
\address{Ilia State University, Tbilisi, Georgia}

\author{Dirk Siersma}
\address{Mathematisch Instituut, Universiteit Utrecht}

\subjclass[2020]{58K05, 70B15}

\keywords{Spider linkage, workspace, Hooke energy, Coulomb energy, stationary charges, robust constrol}

\begin{abstract}
Tripod spiders are the simplest examples of arachnoid mechanisms. Their workspaces and configuration spaces are well known.
For Hooke potential, we give a complete description of the Morse theory and treat the robust control of the spider. For the Coulomb energy,  we use  stationary charges and the trapping domain to study the robust control of spiders.  We show that, for a regular triangle and positive charges, the domain of robust control is non-void. This relates to questions about the Maxwell conjecture about point charges. We end with several natural problems and research perspectives suggested by our results.
\end{abstract}

\date{\today}

\maketitle

\section{Introduction}

The  geometry and topology of  mechanical linkages play an important and increasing role in applied problems.  
Most of the previous studies were concerned with the workspace and the topology of the configuration space, 
which is only known in a few cases summarized in \cite{KM}, \cite{Oh}, \cite{Mo}. The aim of this paper is 
to extend the known results to new classes of linkages and enrich them by considering potential functions on
the workspace, with aplications to control of the linkages considered. Our approach is based on Morse theory 
which yields an explicit connection between the topology of configuration space and the critical points of potential.
 
We will be basically concerned with the so-called arachnoid mechanisms, the topology of which is a largely unexplored topic. 
Nowadays the same type of objects is often called "spidery linkage". Detailed information on the topology of arachnoid 
linkages is important for the design and control of certain types of spider robots. More concretely, we study the simplest 
arachnoid mechanism, the $3$-leg spider also known as tripod spider \cite{Oh}, \cite{Mo}. 

In Section \ref{s:hooke} we  use Hooke energy 
as a potential function and describe in detail the workspace and the critical point theory. This implies, in particular, that 
a weighted version of Hooke energy can be used to control such a linkage.
 
In section \ref{s:coulomb} we study the Coulomb potential of point charges placed at the foots of a $3$-leg spider linkage. We deal with the so-called 
stationary charges of the spider's center (the common point of legs)  and the so-called trapping domain of stationary charges 
determined in \cite{GK1},  \cite{GK2}. This enables us to determine the domain in the workspace of spider, where the position of
spider's center can be robustly controlled by the values of stationary charges using the Coulomb control scenario developed in \cite{KPS}, 
\cite{GK1}. For a symmetric spider based on a regular triangle and having a contractible workspace, we show that the domain of 
robust Coulomb control is a non-void open subset of the workspace containing the center of the reference triangle. 

In the sequel we use Morse theory for manifolds with boundaries and corners. Most of it is "folklore" hidden in the literature. 
We mention here \cite{JJT} and \cite{GM} as the most general reference. For critical points of functions on manifolds with 
boundary and corners we refer to \cite{Si}. For brevity, we refer to criteria of such critical points and corresponding 
topological changes in level surfaces as "standard rules".

In conclusion we mention several related problems and perspectives suggested by our results. In general, this paper may 
be considered as a first step in applying our approach to spider linkages and creating a paradigm for further research 
in this direction. 

The results of this paper were obtained and written up in the framework of a "Research in Residence" project at 
the "Centre des Rencontres Mathematiques" (CIRM, Luminy, France) realized in November of 2022. It is our pleasure 
to acknowledge the support and excellent working conditions at CIRM which largely facilitated our research.

\section{Hooke Energy as a potential function of tripod spider}\label{s:hooke}

We will consider the Hooke  potential for three points  in  several situations.  First without any constraint, next with constraints on maximal an minimal distances, and finally  for a 3-leg spider.

\subsection{No constraints}
We start with the following simplified situation. Given $3$ points $A,B,C$ (in vector notation $a,b,c$), the foot points  of the  spider, 
and its center (joint) the  point $X$. The legs $AX$, $BX$, $CX$ are completely flexible around the foot, their length is allowed to change. 

\begin{center}
\includegraphics[width=6cm]{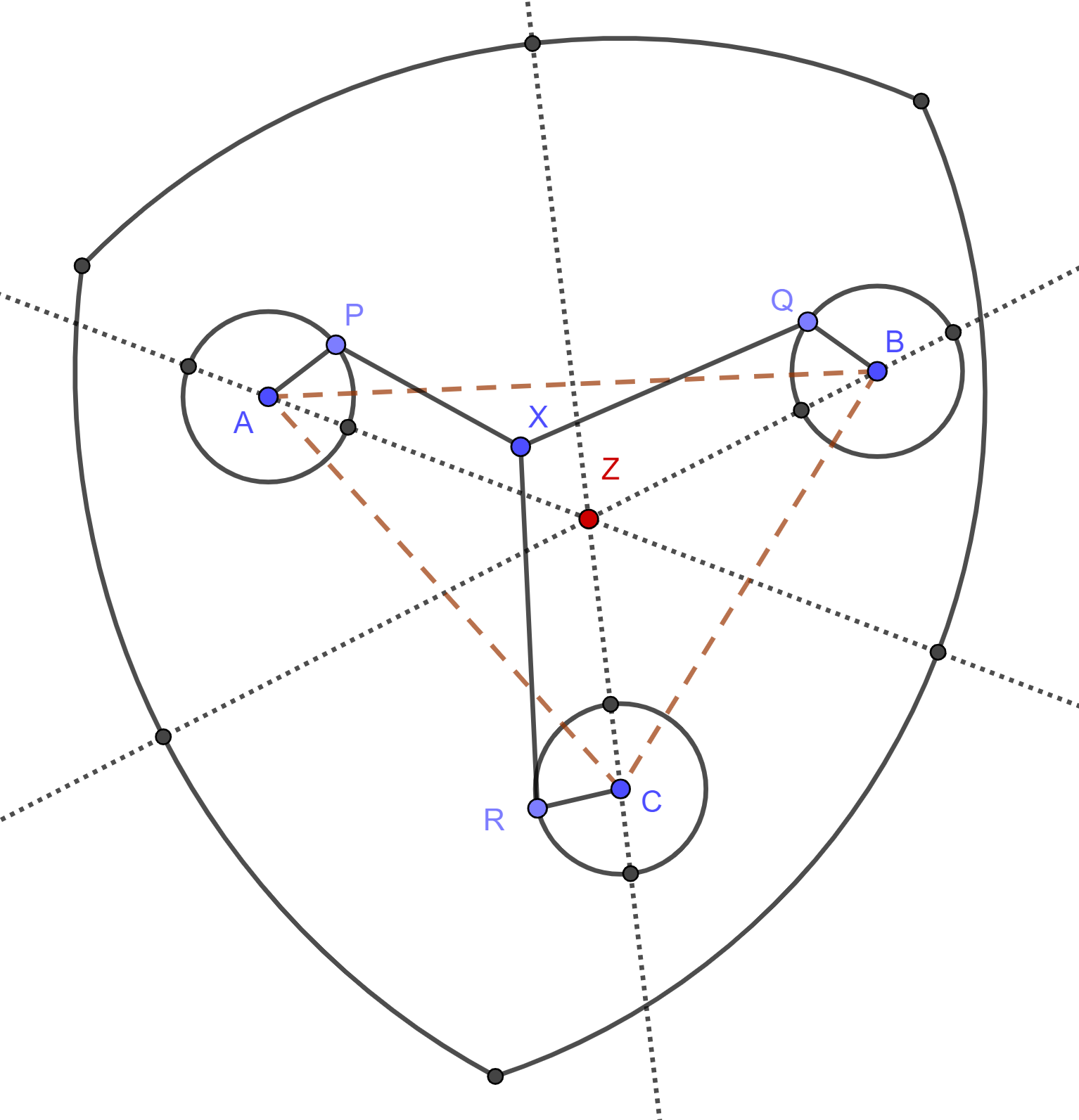}
\end{center}

\smallskip
\noindent
The Hooke Energy  is defined by
$$  H(x) =  ||x- a||^2  +||x-b||^2 +  ||x- c||^2$$
The stationary points are determined by: 
$\nabla H = 6 x - 2(a+b+c) = 0$, so $x = \frac{1}{3} (a+b+c)$, the center of gravity.
There are no other stationary points.
Note that $H(X) = ||x-z||^2 +  K,$ where $z=\frac{1}{3} (a+b+c)$, the center of gravity $Z$ and $K= ||a||^2 + ||b^||^2+||c||^2 - ||z||^2$. 
All level curves are circles.

\subsection{Maximal length constraints}
We require
$$  |XA| \le R_A \; , \; |XB| \le R_B \;  , \; |XC|  \le  R_C.$$
The configuration space is the intersection of 3 discs $D_A, D_B ,D_C$ with centers resp $A,B,C$  and radii $R_A,R_B,R_C$. We study the boundary extrema of $H$.

\begin{lemma}
The map $H$ restricted to the boundary of $D_A$ has an extrema in $Y$ iff $AY$  has the direction of $AB + AC$,  i.e the  two intersection points of the line trough $A$ and $Z$ with the boundary of $D_A$, a minimum and a maximum.

\end{lemma}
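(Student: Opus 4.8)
The plan is to reduce the problem to a one-variable calculus exercise by parametrizing the boundary circle $\partial D_A$. First I would write a point $Y$ on $\partial D_A$ as $y = a + R_A(\cos\theta,\sin\theta)$, so that $H$ restricted to $\partial D_A$ becomes a smooth function of the single angular variable $\theta$. Using the identity $H(x) = \|x-z\|^2 + K$ established in the no-constraint case (with $z = \tfrac13(a+b+c)$ the centroid $Z$), the restricted function is simply $h(\theta) = \|y(\theta)-z\|^2 + K$, i.e. $K$ plus the squared distance from the moving point $Y$ to the fixed point $Z$. Thus the extrema of $H$ on $\partial D_A$ are exactly the extrema of the distance from $Y$ to $Z$ as $Y$ runs over the circle.

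Next I would invoke the elementary geometric fact that the distance from a fixed point $Z$ to a point moving on a circle centered at $A$ is extremized precisely at the two points where the line $AZ$ meets the circle: the near intersection gives the minimum, the far intersection the maximum (this is immediate by differentiating $h(\theta)$, or geometrically since the gradient of the distance-squared function points radially from $Z$ and must be parallel to the circle's normal, which points radially from $A$). Finally I would translate the direction condition into the form stated in the lemma: the vector $AZ = z - a = \tfrac13\big((b-a) + (c-a)\big) = \tfrac13(\vec{AB} + \vec{AC})$, so the line through $A$ and $Z$ is the same as the line through $A$ in the direction $AB + AC$, and $AY$ has that direction exactly at the two claimed points.

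I do not expect a genuine obstacle here; the statement is a direct consequence of the quadratic normal form for $H$. The only point requiring a word of care is the degenerate case $z = a$ (equivalently $\vec{AB} + \vec{AC} = 0$), in which $H$ is constant on $\partial D_A$ and every point is an extremum — the direction condition is then vacuous and consistent with the statement. One should also note that the "two intersection points" description presupposes $Z$ does not lie on $\partial D_A$ itself; if it does, one of the extrema degenerates to the value $K$, but the characterization of the critical locus as $\{AZ \text{-line}\} \cap \partial D_A$ still holds. I would mention these boundary cases briefly and otherwise present the computation of $\nabla(\text{restricted }H)$ and its comparison with the tangent to the circle.
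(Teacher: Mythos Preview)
Your proposal is correct. Both your argument and the paper's reduce to the same geometric fact---that the outward normal to the circle at a critical point must point along the line $AZ$---but you reach it via the normal form $H(x)=\|x-z\|^2+K$ and the elementary ``closest/farthest point on a circle'' observation, whereas the paper computes $\nabla H$ directly and applies the Lagrange multiplier condition $\nabla H = \lambda e_\phi$ (with $A$ at the origin) to get $(6R_A-\lambda)e_\phi = 2(b+c)$. Your route is marginally more conceptual and makes the min/max identification transparent; the paper's is a bare two-line computation. Your remarks on the degenerate cases $z=a$ and $z\in\partial D_A$ are a nice addition not present in the paper.
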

\begin{proof}
Suppose $A$ is the origin.
Let $R_A e_{\phi}$ a point on $\delta D_A$. Use Lagrange multiplyers:
$$ \nabla H = 6 R_A e_{\phi} - 2 (b+c) = \lambda e_{\phi}$$
Therefore
$$ (- \lambda + 6 r_A) e_{\phi} = 2(b+c)$$
\end{proof}

\begin{proposition}
 Let  the workspace $W = D_A \cap D_B \cap D_C$ contain an open neighborhood of $Z$ then $Z$ is the only stationary point of $H$, an absolute minimum. There are no boundary singularities.
\end{proposition}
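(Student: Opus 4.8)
The plan is to combine the unconstrained analysis with the boundary lemma. First I would note that on the interior of $W$ the only candidate for a stationary point of $H$ is the critical point of the unconstrained function, namely $Z=\frac13(a+b+c)$, since $\nabla H = 6x-2(a+b+c)$ vanishes only there; as $W$ contains an open neighbourhood of $Z$, this point is a genuine interior stationary point, and because $H(x)=\|x-z\|^2+K$ it is the absolute minimum of $H$ on all of $\mathbb{R}^2$, hence a fortiori on $W$. So the interior contributes exactly one stationary point, the absolute minimum.

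Next I would rule out boundary singularities. The boundary of $W$ is contained in $\partial D_A\cup\partial D_B\cup\partial D_C$, and a boundary critical point must in particular be a constrained critical point of $H$ restricted to whichever arc(s) of $\partial D_A,\partial D_B,\partial D_C$ it lies on. By the Lemma, the only constrained critical points of $H$ on the full circle $\partial D_A$ are the two intersection points of the line $AZ$ with that circle; the same holds for $\partial D_B$ and $\partial D_C$ with the lines $BZ$, $CZ$. So I would check that none of these six points can lie on $\partial W$: each such point lies on the ray from a foot point through $Z$, and since $Z$ is an interior point of $W$ with $H$ attaining its global minimum there, moving from the foot point along the line toward $Z$ the value of $H$ strictly decreases until $Z$ and strictly increases afterward — in particular, on the segment of the line inside $D_A$ the value $H$ at the ``far'' intersection point with $\partial D_A$ exceeds $H(Z)$, and that far point lies outside $W$ unless the disc is so small that... — here one must argue that the relevant extremal point on $\partial D_A$ is separated from $W$. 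Concretely: the minimum of $H|_{\partial D_A}$ is attained at the point $Y_A^-$ of $\partial D_A$ closest to $Z$ along line $AZ$; but $Z\in\mathrm{int}\,W\subset D_A$ forces $Y_A^-$ to be on the opposite side, i.e. $Z$ lies strictly between $A$ and $Y_A^-$ only if $R_A<|AZ|$, which contradicts $Z\in D_A$. Hence $Y_A^-$ is actually the point with $Z$ between $A$ and $Y_A^-$... I would make this precise by observing $Z\in\mathrm{int}\,D_A$ means $|AZ|<R_A$, so along ray $AZ$ the circle $\partial D_A$ is hit at distance $R_A>|AZ|$, past $Z$; at that point $H>H(Z)$, and one still has to exclude it from $\partial W$, which follows because for a boundary critical point of $H|_{\partial W}$ the standard rules also require the gradient condition with respect to the other two discs, and checking those forces the point to coincide with $Z$ — contradiction.

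The cleanest route, and the one I would ultimately take, avoids this case analysis: since $H$ is a strictly convex function (its Hessian is $6I$) and $W=D_A\cap D_B\cap D_C$ is a convex set (intersection of discs) containing $Z$ in its interior, the unique minimiser of $H$ over $W$ is the unconstrained minimiser $Z$, which is interior; and a strictly convex function on a convex body has no critical points of its restriction to the boundary other than where the outward normal cone meets $\nabla H$ — but $\nabla H(Z)=0$ and $\nabla H$ points radially outward from $Z$ everywhere, so on $\partial W$ the gradient $\nabla H$ is never an inward normal direction needed for a boundary stationary point in the sense of the standard rules, except at points where it is a genuine outward normal, which are regular (non-critical) boundary points, not singularities. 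So there are no boundary singularities.

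The main obstacle is purely expository: making precise what ``stationary point'' and ``boundary singularity'' mean for $W$, a manifold with boundary and corners, and invoking the ``standard rules'' correctly at the corners (where $\partial W$ meets two or three of the circles). At a corner one must check the normal cone condition, and the strict convexity of $H$ together with $\nabla H$ pointing radially out of $Z$ handles it uniformly. I would therefore structure the proof as: (1) interior — the convexity/gradient computation gives $Z$ as the unique interior stationary point and absolute minimum; (2) smooth boundary arcs — apply the Lemma (or directly: $\nabla H$ is never inward-pointing) to exclude boundary critical points; (3) corners — the normal cone at a corner is spanned by the relevant outward disc-normals, none of which can be opposite to the radially-outward $\nabla H$, so no corner singularities either.
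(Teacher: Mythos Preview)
Your convexity argument is correct and constitutes a genuine proof: since $W=D_A\cap D_B\cap D_C$ is convex and $H$ is strictly convex with its unconstrained minimum $Z$ in the interior of $W$, every sublevel set $W\cap\{H\le c\}$ is convex, hence either empty or a topological disc, and so the only value at which the sublevel topology changes is $c=H(Z)$. This simultaneously rules out interior, smooth-boundary, and corner critical points.

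This route differs from the paper's. The paper does not invoke convexity at all; instead it lists the candidate boundary critical points explicitly (via Lemma~1 they are the intersections of the lines $ZA,\,ZB,\,ZC$ with the three circles, together with the three corner points), and then checks each against the ``standard rules'' table to conclude that none of them produces a change in the sublevel-set topology. The paper even remarks that the circular shape of both the level curves and the boundary arcs allows one to see this by picture inspection. What your approach buys is a clean one-line argument that needs no case distinction and no appeal to the table; what the paper's approach buys is a template that carries over verbatim to Proposition~\ref{p:3holes}, where the workspace has three holes and is no longer convex, so your convexity argument would fail there.

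One expository comment: the long middle passage in which you try to argue that the points $Y_A^{\pm}$ do not lie on $\partial W$ is both unnecessary and, as written, does not go through---in fact for suitable radii those points \emph{do} lie on $\partial W$, yet they are still not boundary singularities (precisely because the outward-pointing gradient means no topology change). You should drop that paragraph entirely and go straight to the convexity argument, which already handles those points without needing to locate them.
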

\begin{proof}
The center of gravity $Z$ is clearly a  minimum.
Potential other critical points are the critical points of the restiction of $H$ to the boundary circles (intersections of $ZA$ with the circle around $A$, etc) and also the 3 `corner points `, where 2 boundary circles intersect. The statement follows now from the standaard rules . 
For boundary points, they are here  as follows:

\smallskip
\noindent
\begin{tabular}{llll}
type of restriction & direction normal & type & cell attaching \\
\hline
minimum    & inward  &  no change & no\\
minimum  & outward  & minimum & 0-cell\\
maximum    & inward  &  no change & no\\
maximum & outward  &  saddle & 1-cell \\
\end{tabular}

\smallskip
\noindent
With no change we mean, that the topological type of the lower level sets don't change.\\
For corner points there are similar rules. They give in our case: no change. Due to the special (circular) form of level curves and boundaries one can obtain the same result  by `inspection of pictures'.
\end{proof}

\subsection{Two sided length constraints}
We require
$$  0 < R_A^{-} \le |XA| \le R_A \; , \; 0 < R_B^{-} \le |XB| \le R_B \;  , \; 0 < R_C^{-} \le |XC|  < R_C .$$
The configuration space is the intersection of 3 discs with centers $A,B,C$  and radii $R_A,R_B,R_C$, where some smaller (open)  discs have been taken out. From the  many different posibilties, we consider here the case that these 3 discs have a relatively small radii and the workspace $W$ is a disc with 3 holes, containing an open neighborhood of the gravity center  $Z$. 

\begin{proposition}\label{p:3holes}
 In this situation the point  $Z$ is an absolute minumum of $H$ on $W$  , moreover there are 3 saddle points on the (outer) intersection points of  the small discs with $ZA$, $ZB$, $ZC$.
\end{proposition}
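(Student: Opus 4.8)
The plan is to rerun, in the present setting, the Morse-theoretic count of the previous proposition; the one new feature is the three extra boundary circles created by removing the small discs. Recall that $H(x)=3\|x-z\|^{2}+K$, with $z$ the centroid and $K$ a constant, so $H\ge K=H(Z)$ on the whole plane, the level curves of $H$ are the circles centred at $Z$, and $H$ is strictly increasing in the distance to $Z$. Since $W$ contains an open neighbourhood of $Z$ by hypothesis, $Z$ lies in the interior of $W$, has Hessian $6I$ there, and is therefore a non-degenerate minimum of index $0$; it is the only interior critical point, and since $H\ge H(Z)$ it is the absolute minimum of $H$ on $W$. All remaining critical behaviour lives on $\partial W$, which decomposes into the ``outer'' part --- arcs of $\partial D_{A}\cup\partial D_{B}\cup\partial D_{C}$ meeting in the corners of the convex region $D_{A}\cap D_{B}\cap D_{C}$ --- and the three ``inner'' circles $\partial D_{A}^{-},\partial D_{B}^{-},\partial D_{C}^{-}$, which, since $W$ is a disc with three holes, are pairwise disjoint and disjoint from the outer part.

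Next I would locate the boundary critical points and classify them by the standard rules tabulated above, reading the ``inward/outward'' column as referring to the descent direction $-\nabla H$ relative to $W$. On any circle centred at a vertex $V\in\{A,B,C\}$ the restriction of $H=3\|x-z\|^{2}+K$ has precisely two critical points, the two ends of the diameter through $Z$ --- as in the Lemma, whose argument applies verbatim to the small circles --- the nearer one being a minimum of the restriction and the farther one a maximum. On an outer circle $\partial D_{V}$: the point farthest from $Z$ lies beyond $V$, and in the configuration at hand (the symmetric one, say, where $D_{A}\cap D_{B}\cap D_{C}$ is bounded by three arcs meeting in three corners) it is cut off by the other two discs, hence not on $\partial W$; the point of $\partial D_{V}$ nearest $Z$, when it lies on $\partial W$, has $\nabla H$ pointing radially away from $Z$, i.e.\ out of $W$, so $-\nabla H$ points into $W$ and the row ``minimum / inward'' of the table gives no topological change. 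The corners are handled as in the previous proposition: the corner rules, or inspection of the circular picture, give no change. On an inner circle $\partial D_{V}^{-}$ (radius $R_{V}^{-}<|ZV|$, since $Z\notin D_{V}^{-}$): the point nearest $Z$ lies on the segment $ZV$, and there $\nabla H$ again points radially away from $Z$, which now means out of $W$ into the hole, so $-\nabla H$ points into $W$ and the row ``minimum / inward'' gives no change; while at the point of $\partial D_{V}^{-}$ farthest from $Z$ --- the ``outer'' intersection of the line $ZV$ with $\partial D_{V}^{-}$, lying beyond $V$ --- the restriction of $H$ has a maximum and $\nabla H$ points into $W$ (its outward normal there points back toward $V$), so $-\nabla H$ points out of $W$ and the row ``maximum / outward'' produces a saddle with a $1$-cell attached.

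Assembling the pieces, $H$ has on $W$ exactly the absolute minimum $Z$ of index $0$ together with three saddles of index $1$, located at the outer intersections of $\partial D_{A}^{-},\partial D_{B}^{-},\partial D_{C}^{-}$ with the lines $ZA,ZB,ZC$, which is the assertion; as a check, $1-3=-2=\chi(W)$, the Euler characteristic of a disc with three holes. The step I expect to be the genuine obstacle --- the only one that uses the geometry of the configuration rather than the formula for $H$ --- is showing that the ``outer'' critical points of $\partial D_{A},\partial D_{B},\partial D_{C}$ (which would otherwise be spurious saddles) really do fall outside $W$, and that the corners contribute nothing. For the symmetric spider this is an elementary estimate on distances to the vertices; in general it should be incorporated into the description of the admissible ``disc with three holes'' configurations, in the same spirit as the ``inspection of pictures'' invoked for the preceding proposition.
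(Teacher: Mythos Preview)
Your argument is correct and is exactly the approach of the paper---the paper's own proof is the one-line ``apply the standard rules for boundary and corner singularities'', and you have simply unpacked that line in full detail: locate the two diametral critical points of $H$ on each boundary circle via the Lemma, read off their type from the inward/outward table, and observe that only the far points of the three inner circles survive as genuine saddles. Your Euler-characteristic check $1-3=-2=\chi(W)$ is a nice addition. One small remark: your closing worry is unnecessary. At the far point of a big circle $\partial D_V$ the gradient $\nabla H=6(x-z)$ points radially away from $Z$, hence out of $D_V\supset W$, so $-\nabla H$ points \emph{into} $W$; by the table this is ``maximum/inward $\to$ no change'', not a spurious saddle. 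Thus even if that point happened to lie on $\partial W$ it would contribute nothing, and no extra geometric hypothesis on the configuration is needed there.
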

\begin{proof}
The center of gravity $Z$ is an absolute minimum.
Other potential  critical points are the critical points of the restiction of $H$ to the boundary circle,  the 3 small circles and also the 3 `corner points `, where 2 boundary circles intersect.
The statement follows now from standard rules for boundary and corner singularities as explained above.

\end{proof}
\begin{remark}
The phase portrait  of the gradient of $H$  consists of  straight lines to $Z$, except for those  lines that intersect one of the small discs.
Their trajectories follow from the moment that they intersect these discs  a part of the boundary circle until they become `visible' from $Z$ and then continue as a straight line. There are 3 intervals between the outer and the inner circles, which are conflict strata for the gradient flow.
\end{remark}

\begin{remark}
The Morse theory in the 3 cases above is as follows:\\
In 2.1:  $b_0 =1, b_i = 0 \; (i \ge 1)$  and  $\mu_1=1 ,  \mu_i = 0  \;  (i \ge 1)$,\\
In 2.2:  $b_0 =1, b_i = 0 \; (i \ge 1)$  and  $\mu_1=1 , \mu_i = 0  \;  (i \ge 1)$,\\
In 2.3:  $b_0 =1, b_2= 3  ,  b_i = 0  \; (i \ge 2)$  and  $\mu_1=1, \mu_1 = 3 , \mu_i = 0  \; (i \ge 2)$.\\
In these three cases, $H$ is a perfect Morse function.
\end{remark}

\subsection{Robust control}
A similar study can be made for the weighted Hooke Energy:
$$ H_{\alpha,\beta,\gamma} =  \alpha ||x- a||^2  + \beta ||x-b||^2 + \gamma  ||x- c||^2$$
Assume  $\alpha > 0,  \beta >0,  \gamma > 0$. This potential function has the point $ Z = (\alpha : \beta : \gamma)  $ (barycentric coordinates) as an absolute minimum.  The level curves are circles with the center  at this point $Z$. The critical point theory is similar to the case of $H$, which corresponds to  $ (\alpha: \beta :\gamma ) = (1:1:1)$.

\smallskip
\noindent
A proper choice of the controls $(\alpha,\beta,\gamma)$ can be used to move the 3-leg spider  to any point in the triangle $ABC$ via minimum points of Hooke energy. This procedure yields a \textit{robust control} of the spider.

\subsection{The 3-leg spider}
In this case the telescopic connections are replaced by 2-arms with fixed arm lengths and flexible turning point.
We assume that the two parts of the arm have different lengths.
So $AX$ is replaced by the arm $AP\cdot PX$,  $BX$ by $BQ \cdot QX$ and $CX$ by $CR \cdot RX$. The configuration space $\mathcal C$ is an 8-fold cover of the workspace $W$ with certain identifications at the boundaries.  
The topology has been studied in full generality by P. Mounod \cite{Mo}. He showed that an $n$-leg spider with generic arm lengths has a smooth two-dimensional configuration space, and gave a formula for its Euler characteristic.
In his paper he also solved some questions of J. O'hara \cite{Oh}.

The configuration space of 3-leg spider projects now clearly to the workspace $W$ from 2.3 (with the 3 small holes). 
$R_A^-$ is equal to the difference of arm-lengths in the  $AP\cdot PX$ and $R_A$ is equal to their sum. Similar for the points $B$ and $C$.
We assume here that the arm lengths are  such, that they give the workspace mentioned above.

\smallskip
\noindent
 According to the formula of \cite{Mo} the Euler characteristic of $\mathcal{C}$ is -22. So we have a smooth  surface with genus 12.

We consider the quadratic distance function $H$ (see above).
\begin{proposition}
The Hooke energy $H$ on the configuration space $\mathcal{C}$ of the 3-leg spider has:
\begin{itemize}
\item 8  minima
\item 36 saddles
\item 6 maxima
\end{itemize}

\end{proposition}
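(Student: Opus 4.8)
The plan is to study $H$ on $\mathcal{C}$ through the projection $\pi\colon\mathcal{C}\to W$ that sends a configuration to the position $X$ of the spider's center, using that $H$ depends on $X$ alone, so that on $\mathcal{C}$ one analyses $H\circ\pi$. Over the interior of $W$ the map $\pi$ is the trivial $8$-sheeted covering indexed by the sides on which the three elbows $P,Q,R$ sit; there $H$ has the single critical point $Z$, an absolute minimum, which therefore lifts to $8$ nondegenerate minima. All remaining critical points lie over $\partial W$, which consists of the three inner circles $|XA|=R_A^{-}$, $|XB|=R_B^{-}$, $|XC|=R_C^{-}$, three arcs of $\partial D_A,\partial D_B,\partial D_C$, and three corners where two such arcs meet; each of these loci is exactly the set where the corresponding leg is straight, i.e.\ a fold (branch) locus of $\pi$. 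By the standard rules used for Proposition~\ref{p:3holes} and by the Lemma above, the critical points of $H$ on the manifold-with-corners $W$ supported on $\partial W$ are: the inner points $q_A,q_B,q_C$ (minima of $H$ on the inner circles); the outer points $p_A,p_B,p_C$ (maxima on the inner circles, the saddles of Proposition~\ref{p:3holes}); the points $Y_A,Y_B,Y_C$ where the line through $A$ and $Z$ (resp.\ $B$ and $Z$, $C$ and $Z$) meets $\partial D_A$ (resp.\ $\partial D_B$, $\partial D_C$) nearest to $Z$, each a minimum of $H$ on the corresponding arc; and the three corners. The crucial point is that $q_\bullet$, $Y_\bullet$ and the corners, which are ``no change'' points for the Morse theory of $H$ on $W$, nonetheless become honest critical points of $H\circ\pi$ on the closed surface $\mathcal{C}$.

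Next I would make the local form of $\pi$ over a fold locus explicit. Near a smooth point of an arc or inner circle where, say, leg $A$ is straight, $\mathcal{C}$ has coordinates $(s,\theta)$ with $s$ the signed offset of the elbow $P$ and $\theta$ running along the locus, and $\pi$ reads $(s,\theta)\mapsto(u,\theta)$ with $u=s^{2}/c+O(s^{4})$, $c>0$, where $u\geq 0$ measures depth into $W$; hence $H\circ\pi$ is smooth, and a boundary critical point of $H$ at which the restriction to the boundary stratum has index $\lambda$ gives a nondegenerate critical point of $H\circ\pi$ of index $\lambda$ when $-\nabla H$ points outward there and of index $\lambda+1$ when $-\nabla H$ points inward. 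Since $H=||x-z||^{2}+K$ has circular level curves one reads off: at $q_A$ (minimum on the inner circle, $-\nabla H$ inward) the index is $0+1=1$; at $p_A$ (maximum on the inner circle, $-\nabla H$ outward) it is $1+0=1$; at $Y_A$ (minimum on the arc, $-\nabla H$ inward, since $Z$ lies between $A$ and $Y_A$) it is $0+1=1$. At a corner on $\partial D_A\cap\partial D_B$, legs $A$ and $B$ are both straight and the local model is the double fold $(s_A,s_B)\mapsto(s_A^{2}/c_A,\,s_B^{2}/c_B)$; because $||a-z||<R_A$ and $||b-z||<R_B$, the function $H$ strictly decreases as $X$ leaves the corner toward $A$ and toward $B$, so the Hessian of $H\circ\pi$ at each lift is negative definite and the lift has index $2$.

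It remains to count fibres. Over $Z$: $8$ points, all minima. Over each of $q_A,p_A,q_B,p_B,q_C,p_C$ exactly one leg is straight, so the fibre has $1\cdot2\cdot2=4$ points, each of index $1$ — $24$ saddles. Over each of $Y_A,Y_B,Y_C$ one leg is straight, again $4$ points of index $1$ each — $12$ more saddles. Over each of the three corners two legs are straight, so the fibre has $1\cdot1\cdot2=2$ points of index $2$ — $6$ maxima. This yields $8$ minima, $36$ saddles, $6$ maxima, consistently with $8-36+6=-22=\chi(\mathcal{C})$ from Mounod's formula; indeed, once one knows that the only minima are the $8$ lifts of $Z$ and the only maxima are the $6$ lifts of the corners, the value $36$ of the saddle number is forced by the Euler characteristic.

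The main obstacle is the local structure of $\pi$ over $\partial W$: the paper only records that $\mathcal{C}$ is an $8$-fold cover ``with certain identifications at the boundaries'', so one has to make the simple fold $u\sim s^{2}$ along the three arcs and three inner circles, and the double fold at the corners, precise enough to see that $H\circ\pi$ is smooth and to compute Hessians. Closely tied to it is completeness: one must verify that $H\circ\pi$ has no further critical points, i.e.\ that on a fold locus $d(H\circ\pi)=0$ forces the base point to be critical for $H$ restricted to the relevant stratum of $\partial W$, and that the second (farther from $Z$) critical point of $H$ on each of $\partial D_A,\partial D_B,\partial D_C$ lies outside $W$. Both hold for the symmetric spider over an equilateral triangle, with arm lengths chosen so that $W$ is a disc with three small holes containing $Z$ — the case assumed here — and persist on an open set of nearby configurations.
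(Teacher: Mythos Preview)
Your proof is correct and follows essentially the same route as the paper: analyse $H\circ\pi$ via the projection to the workspace, use that $\pi$ is an $8$-fold cover over the interior and a fold/double-fold over the boundary strata, and read off the Morse indices. The paper's own proof is much terser---it simply asserts that the $36$ saddles come from ``$9$ special points on boundary circles with covering degree $4$'' and the $6$ maxima from the three corners with degree $2$, deferring the index computation to ``topological study of the gluing''---whereas you explicitly enumerate the nine points ($q_\bullet,p_\bullet,Y_\bullet$), write down the fold model $u\sim s^2$, and compute the Hessians. Your observation that $\lVert a-z\rVert<R_A$ forces $(X-Z)\cdot(X-A)=R_A^2+(A-Z)\cdot(X-A)\geq R_A(R_A-\lVert A-Z\rVert)>0$ for every $X\in\partial D_A$ is a clean way to see that $-\nabla H$ points inward along the entire outer arc, which the paper does not spell out.
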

\begin{proof}
The potential $H$ is defined on the configuration space via projection to the work space. We will use the results from Propostion  \ref{p:3holes} and
use (branched) covering arguments. The covering is 8-fold above the open part of the workspace. The branching takes place at the boundaries and corners. The potential critical points are just the  preimages of the special points, which have been studied in Proposition \ref{p:3holes}.

We find in this way $8$ critical pre-images of the center of gravity $Z$, these are all absolute minima (with the same value).
The $36$ saddles come from the $9$ special points on boundary circles, where we have covering degree $4$. Their types can be deduced
by topological study of the gluing of the two boundary pieces. Finally, we have $6$ maxima, which corespond to the three corner points, 
where the covering is $2$-fold and at each such point such four pieces are glued together.

There is is a dictionary between the Morse theory on the workspace treated above and the configuration space. Compare the level curves 
in both cases via gluing and this results in the Morse indices  mentioned in the  proposition.
\end{proof}

\section{Coulomb Energy as a potential function  of tripod spider} \label{s:coulomb}
From the viewpoint of control theory, it is also interesting and practically important to consider the Coulomb potential of point charges which are placed at the fixed foot points of a spider and can be varied in order to change the position of its body endowed with a fixed charge. A similar scenario was studied in big detail in the papers \cite{GK1}, \cite{GK2} and our exposition in this section relies on the constructions and results given in those two papers.

\smallskip
\noindent
We begin with recalling several concepts and constructions  in the form needed in the sequel. Recall that the Coulomb potential $E = E(Q@A)$ of $n$ point charges of non-zero magnitudes $q_i$ placed at $n$ points $A_i$ is a function of point $X$ in the complement of points $A_i$ by the formula
$$E = \sum \frac{q_i}{d_i}, $$
where $d_i = d(X, A_i)$ is the Euclidean distance between the points $P$ and $A_i$. As usual a point $X$ is called a stationary point (or equilibrium point) of potential $E$ if its gradient $\nabla E$ vanishes at $X$. There exists a huge number of results and problems concerned with the Coulomb potential and equilibrium points of point charges. For us it is important to mention that, as was proven by M. Morse himself, for a generic configuration and generic values of point charges Coulomb potential is a Morse function \cite{CM}. This suggests that Coulomb potential is a reasonable candidate for developing Morse theory for a spider linkage along the same lines as in the case of Hooke’s energy considered above. The aim of this section is to present a number of results in this direction and present an application in the spirit of robust Coulomb control discussed in \cite{KPS}, \cite{GK1}, \cite{GK2}.   
Another relevant topic in our context is the so-called Maxwell’s conjecture on point charges stating that the number of isolated equilibrium points of $n$ point charges in $2$-dimensional Euclidean space does not exceed $(n-1)^2$ (see, e.g., a recent review in  \cite{GaNoSh}). 

\smallskip
\noindent
We next focus on the case of 3 point charges. Given a triangle $\triangle$ with vertices $A, B, C$, referred to as a reference triangle (or base triangle), 
and a point $X$ in the same plane, the triple of normalized stationary charges $Q(X) =  \{q_i(X; \triangle )\}$ is defined  as a triple of non-zero real 
numbers $q_i$ such that $\nabla E(X,Q) = 0$. As was proven in \cite{GK1} these charges are uniquely determined by the ratio's
$$q_1:q_2:q_3 = d_1^3 \mathcal{A}_1 : d_2^3 \mathcal{A}_2 : d_3^2 \mathcal{A}_3,$$
where $\mathcal{A}_{1}$ is the area of $\triangle BCX$, $\mathcal{A}_2$ of $\triangle CAX$, $\mathcal{A}_3$ of $\triangle ABX$. We denote the angles at X of these triangles by $\alpha_1,\alpha_2,\alpha_3$ respectively.

\smallskip
\noindent
 In \cite{GK1}  , \cite{GK2}  \textit{ the Coulomb trapping domain} $T(\triangle)$  is defined as the set 
of  minimum points of the potential induced by the charges $Q(X)$. The trapping domain is given by the inequality $h(X) > 0$, where $h$ is the Hessian of $E(X,Q(X))$, which we call the \textit{ trapping Hessian}. We have  in geometric terms the following formula:
$$ h(X) =  -2 \mathcal{A} +9 (\prod_{i}^3  \sin \alpha_i  )(\sum_{i=1}^3 d_i^2 \mathcal{A}_i).$$

It was conjectured in \cite{GK2} that 
   $T(\triangle)$ is a non-empty subset containing the incenter of  $\triangle$. This conjecture has been verified in a number of cases, including
the regular triangle. 

\smallskip
\noindent
As was mentioned, a complete description of equilibria (stationary points) of $E$ and their Morse indices is not known even in the case of three charges. 
However rather complete results for special configurations of three point charges have been obtained in \cite{Ts}. In particular, the Maxwell’s conjecture 
was proven for arbitrary three non-zero point charges at the vertices of a regular triangle and the Morse indices are computed. 

\smallskip
\noindent
For this reason, in the rest of the text we always assume that the reference triangle is regular since it is the most important case for applications 
and, at the same time, in this case all necessary background results are rigorously proven in \cite{GK2}.

\smallskip
\noindent
Further, given a $3$-leg spider $S$ with the center at (moving) point $X$, (fixed) foot-points $A, B, C$ and identical legs with the links of lengths $a$ 
(thigh length) and $b$ (foot length), we denote by $W(S)$ its workspace considered above.  To simplify the discussion we assume that $R >a > b > 0$, 
in which case the whole workspace of the spider is the intersection of three  circular annuli with centers $A, B, C$.

\begin{proposition}
The domain of robust Coulomb control of spider $S$ in the above scenario is equal to the intersection $D(S) = W(S) \cap T(\triangle)$.
\end{proposition}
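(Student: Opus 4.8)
The statement is essentially a matter of unwinding the two relevant definitions, so the plan is to show that a point $X$ of the workspace admits robust Coulomb control precisely when it satisfies \emph{both} the spider-reachability condition and the trapping condition, and that these are the defining conditions of $W(S)$ and $T(\triangle)$ respectively. First I would recall what ``robust Coulomb control at $X$'' means in the scenario of \cite{KPS}, \cite{GK1}, \cite{GK2}: one prescribes the stationary charges $Q(X)=\{q_i(X;\triangle)\}$ at the feet $A,B,C$, and one requires that the induced Coulomb potential $E(\cdot,Q(X))$ has $X$ as a \emph{non-degenerate local minimum}, so that small perturbations of the charges move the equilibrium continuously and the center of the spider is ``trapped'' at (a point near) $X$. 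By the very definition given above, the set of such $X$ is exactly the trapping domain $T(\triangle)$, cut out by the trapping Hessian inequality $h(X)>0$.

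Next I would observe that for the charge-controlled equilibrium to be physically realizable by the spider, the center $X$ must actually be an attainable position of the mechanism, i.e. $X\in W(S)$. Under the standing assumption $R>a>b>0$ the workspace $W(S)$ is the intersection of the three annuli of inner radius $a-b$ and outer radius $a+b$ about $A,B,C$ — this is precisely the ``two sided length constraints'' situation of Section~2.3 with $R_A^-=R_B^-=R_C^-=a-b$ and $R_A=R_B=R_C=a+b$. Hence the positions where the spider can both sit and be robustly held by stationary charges are exactly the points lying in $W(S)$ and in $T(\triangle)$ simultaneously, which gives $D(S)=W(S)\cap T(\triangle)$.

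Finally I would check the two inclusions explicitly. For $D(S)\subseteq W(S)\cap T(\triangle)$: if $X$ is robustly controllable then $X$ is a reachable configuration, so $X\in W(S)$, and the control is robust exactly when $X$ is a non-degenerate minimum of $E(\cdot,Q(X))$, i.e. $h(X)>0$, so $X\in T(\triangle)$. Conversely, if $X\in W(S)\cap T(\triangle)$, then the mechanism can place its center at $X$, the charges $Q(X)$ make $\nabla E(X,Q(X))=0$ by construction of stationary charges, and $h(X)>0$ makes this equilibrium a non-degenerate minimum; by the implicit function theorem applied to $\nabla E(\cdot,Q)=0$ in the variables $(X,Q)$ (using non-degeneracy of the Hessian), the equilibrium depends smoothly on $Q$ near $Q(X)$ and remains a minimum, which is exactly robust control.

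I expect the only genuine subtlety — and the one point to state carefully rather than grind through — to be the passage from ``$h(X)>0$'' to ``robust control'' via the implicit function theorem, namely that one must argue that the perturbed equilibrium stays inside the workspace annuli; but since $W(S)$ is open away from its boundary and $T(\triangle)$ is open, the intersection is open and this is automatic in the interior, so no essential obstacle arises. Everything else is bookkeeping with the definitions already recorded above.
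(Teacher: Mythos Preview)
Your proposal is correct. The paper states this proposition without proof, treating it as immediate from the definitions of $W(S)$ and $T(\triangle)$; your argument makes explicit precisely the definitional unwinding (reachability plus positivity of the trapping Hessian, with the implicit function theorem supplying the ``robust'' part) that the paper leaves to the reader.
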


\noindent
As said before the topology of the workspace depends on the triangle and the arm lengths.  
This influences the intersection which could even be void. For the regular triangle, we make the choice that the workspace becomes a contractible 
region bounded by three circular arcs and containing the incenter.  Also we assume that all charges are positive. 
The trapping domain $T(\triangle)$ is explicitly known as the interior of a compact convex region bounded by $h=0$ and containing
the incenter of the triangle. Under these conditions this yields the following conclusion.
\begin{corollary}
The domain of robust Coulomb control of spider $S$ based on a regular triangle is non-void.

\end{corollary}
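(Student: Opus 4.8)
The plan is to combine the Proposition immediately above with the two explicit facts we have imposed on the regular-triangle configuration. By the Proposition, the domain of robust Coulomb control is $D(S) = W(S) \cap T(\triangle)$, so it suffices to show that this intersection is non-void. First I would invoke the description of $T(\triangle)$ for the regular triangle: by \cite{GK2} (the case verified there), $T(\triangle)$ is the interior of a compact convex region bounded by the curve $h = 0$, where $h$ is the trapping Hessian displayed above, and this open region contains the incenter $I$ of $\triangle$. Second, I would use the standing assumption on the arm lengths $R > a > b > 0$ together with our explicit choice, namely that the workspace $W(S)$ is a contractible region bounded by three circular arcs and containing the incenter $I$. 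Since both $W(S)$ and $T(\triangle)$ are open neighbourhoods of the same point $I$, their intersection is a non-empty open set; in particular $I \in D(S)$, and $D(S) \neq \varnothing$. Because of the symmetry of the regular triangle, the incenter coincides with the centroid $Z$, so this is also consistent with the Hooke picture, where $Z$ was the unique minimum.

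The only genuine content to verify is that these two regions really do both contain $I$ under the hypotheses stated, so the main obstacle is bookkeeping rather than a deep estimate. For $T(\triangle)$ containing $I$ in the regular case, I would quote \cite{GK2} directly — the excerpt explicitly says the conjecture "has been verified in a number of cases, including the regular triangle", and that $T(\triangle)$ is "explicitly known as the interior of a compact convex region bounded by $h = 0$ and containing the incenter". One may, if desired, double-check this by evaluating $h$ at $I$: for the regular triangle with $X = I$, symmetry forces $d_1 = d_2 = d_3 =: d$, $\mathcal{A}_1 = \mathcal{A}_2 = \mathcal{A}_3 = \mathcal{A}/3$, and $\alpha_1 = \alpha_2 = \alpha_3 = 2\pi/3$, so $\sin\alpha_i = \sqrt{3}/2$; substituting into
$$ h(I) = -2\mathcal{A} + 9\Bigl(\prod_{i=1}^3 \sin\alpha_i\Bigr)\Bigl(\sum_{i=1}^3 d_i^2 \mathcal{A}_i\Bigr) = -2\mathcal{A} + 9 \cdot \tfrac{3\sqrt 3}{8} \cdot d^2 \mathcal{A} $$
and checking $h(I) > 0$ reduces to a one-line inequality relating $d^2$ and $\mathcal{A}$ for the equilateral triangle, which holds with room to spare. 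Likewise, that $I$ is an interior point of $W(S)$ is part of our explicit construction of the workspace in this section, not something to be re-derived.

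Finally I would note that no appeal to the full strength of the Maxwell-type count from \cite{Ts} is needed for the Corollary itself; that result is what guarantees the background description of equilibria is rigorous, and hence that the Proposition applies with $T(\triangle)$ as described, but the Corollary is then a short deduction: $D(S)$ contains a neighbourhood of the incenter and is therefore a non-void open subset of $W(S)$, which is exactly the claimed conclusion.
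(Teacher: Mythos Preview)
Your proposal is correct and follows essentially the same route as the paper: the paper's argument (given in the paragraph immediately preceding the Corollary) is simply that both the workspace $W(S)$ and the trapping domain $T(\triangle)$ are chosen, respectively known from \cite{GK2}, to contain the incenter, so their intersection $D(S)$ is non-void. Your additional verification of $h(I)>0$ and the remark about the incenter coinciding with the centroid are extras not present in the paper, but they do not change the approach.
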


\medskip
\noindent
For a spider based on the regular triangle one can also carry out a complete Morse analysis in a similar way as for the Hooke potential.
One knows that for given positive charges there is a single minimum of E in the trapping area $T(\triangle)$ and $3$ saddle points 
in $\triangle$  ouside  $T(\triangle)$.  The poles of $E$ are not in the workspace and therefore not in the configuration space, which 
is an $8$-fold cover of the workspace branched (glued) over boundary components. After obtaining information about the boundary singularities 
one gets a complete Morse analysis.

\section{Concluding remarks}

There are several natural problems and research perspectives suggested by our results. We mention
some of them which seem most interesting and feasible.

First of all, note that all the concepts used in our paper make sense for $n$-leg spiders with arbitrary base $n$-gon 
and arbitrary lengths of the links. So a natural next step is to generalize the above results to Hooke and Coulomb 
potential of an $n$-leg spider.

Next, one can also consider other geometrically or physically meaningful potentials of $n$-leg spider like Riesz energies or 
the oriented area of the polygon formed by the moving joints of spider.

Further, one can formulate several extremal problems related to the design of spider linkages with desirable properties
like the area or shape of the workspace.

Moreover, one can try to extend our results to the case of $n$-leg spiders with a moving $n$-gon platform instead of the center
point. 

Finally, it is natural to extend our discussion to spatial spiders where the legs can move in a three-dimensional Euclidean space.

It is also possible to consider "legs" with more than two links or with flexible non-extendible tether links. Each of the mentioned
possibilities deserves a closer look and careful exploration which we intend to undertake in the future research.  

Returning to $3$-leg spiders we add that an urgent topic is to clarify what happens with the domain of robust Coulomb control for 
arbitrary configuration of foots. In this case, there are several feasible problems concerned with the optimization of workspace with 
given restrictions on the sum of the lengths of links and circumradius of the base triangle. For example, there is good evidence 
that the maximal area of $W(S)$ arises for the most symmetric spider with equal links and regular configuration of foots.



\end{document}